\theoremstyle{plain}
\newtheorem{theorem}{Theorem}[section]
\theoremstyle{definition}
\newtheorem{definition}[theorem]{Definition}
\newtheorem{example}{Example}
\theoremstyle{remark}
\def\RR{\mathbb R}
\newcommand{\E}{\operatorname{E}} 
\newtheorem{condition}{Condition}
\newtheorem{proposition}{Proposition}
\newcommand{\tr}{\operatorname{Tr}}
\begin{document}

\begin{frontmatter}
\title{Prior distributions for structured semi-orthogonal matrices}
\runtitle{Priors for semi-orthogonal matrices}

\begin{aug}
\author[A]{\fnms{Michael}~\snm{Jauch}\ead[label=e1]{mjauch@fsu.edu}},
\author[B]{\fnms{Marie-Christine}~\snm{D{\"u}ker}\ead[label=e2]{marie.dueker@fau.de}}
\and
\author[C]{\fnms{Peter D.}~\snm{Hoff}\ead[label=e3]{peter.hoff@duke.edu}}
\address[A]{Department of Statistics,
Florida State University\printead[presep={,\ }]{e1}}
\address[B]{Department of Data Science,
FAU Erlangen-N{\"u}rnberg\printead[presep={,\ }]{e2}}
\address[C]{Department of Statistical Science,
Duke University\printead[presep={,\ }]{e3}}
\runauthor{M. Jauch et al.}
\end{aug}

\begin{abstract}
Statistical models for multivariate data often include a semi-orthogonal matrix parameter. In many applications, there is reason to expect that the semi-orthogonal matrix parameter satisfies a structural assumption such as sparsity or smoothness. From a Bayesian perspective, these structural assumptions should be incorporated into an analysis through the prior distribution. In this work, we introduce a general approach to constructing prior distributions for structured semi-orthogonal matrices that leads to tractable posterior inference via parameter-expanded Markov chain Monte Carlo. We draw on recent results from random matrix theory to establish a theoretical basis for the proposed approach. We then introduce specific prior distributions for incorporating sparsity or smoothness and illustrate their use through applications to biological and oceanographic data.
\end{abstract}

\begin{keyword}
\kwd{Network eigenmodel}
\kwd{polar decomposition}
\kwd{principal component analysis}
\kwd{random matrices}
\kwd{sparsity}
\end{keyword}

\end{frontmatter}

\section{Introduction}


Many statistical models for multivariate data include a parameter belonging to the set of semi-orthogonal matrices $\mathcal{V}(k,p) = \{ Q \in \RR^{p\times k} \mid  Q^\top Q =  I_k\} $ with $k < p$.  The set $\mathcal{V}(k,p)$ is referred to as the Stiefel manifold \citep{Chikuse2003}. In these models, the columns of the semi-orthogonal matrix parameter can often be understood as spanning a low-dimensional subspace in which much of the interesting variation in high-dimensional data occurs. The rows often represent the coordinates of statistical units embedded in a low-dimensional latent space. Examples appear throughout the literature on covariance estimation, principal component analysis (PCA), and factor models \citep{Cai2013, Gao2015, Franks2019, bryan2021multirank, Park2024, Xie2024}; low-rank matrix or tensor approximation \citep{Hoff2007, Hoff2016, choi2017selecting, gavish2017optimal,  Yuchi2023}; network data \citep{Hoff2009, Loyal2025, li2024bsnmani}; 
spatial statistics \citep{north2024flexible}; time series analysis \citep{meng2024bayesian}; and other topics. These models have been applied in diverse fields including climate science \citep{north2024flexible}, ecology \citep{Stouffer2021, Loyal2025}, economics \citep{Loyal2025}, neuroscience \citep{li2024bsnmani, meng2024bayesian, Park2024}, and seismic imaging \citep{Yuchi2023}.


\begin{example}[Model-based singular value decomposition] \label{ex:SVD} Suppose $Y$ is an $n \times p$ data matrix. It is common to assume that $Y = M + E$ where $M$ has rank $k \ll \min\{n, p\}$ and $E$ is a matrix of errors \citep{Hoff2007, choi2017selecting, gavish2017optimal, Yuchi2023}. 
The low-rank matrix $M$ can be represented in terms of its truncated singular value decomposition (SVD) $M = U D V^\top$ where $U\in \mathcal{V}(k,n),$ $V \in\mathcal{V}(k,p),$ and $D$ is a $k \times k$ diagonal matrix with positive entries on the diagonal. 
Let $y_i, u_i,$ and $e_i$ be the $i$th rows of $Y, U,$ and $E.$ The model-based SVD can be viewed as a factor model in which the $p$ measurements in $y_i$ are related to $k$ latent factors in $u_i$ through the equation $y_{i} = V D u_i+ e_{i}.$ The model-based SVD is also closely related to PCA. In fact, if $Y$ is centered and $e_{ij} \stackrel{\text{ind.}}{\sim} \text{Normal}(0, \sigma^2)$, the MLEs of $UD$ and $V$ coincide with the principal component scores and loadings from classical PCA. 
\end{example}

\begin{example}[Network eigenmodel] \label{ex:eigenmodel}
Now suppose $Y = (y_{ij})$ is a $p \times p$ adjacency matrix with $y_{ij} \in \{0,1\}$ indicating the presence or absence of a relationship between object $i$ and object $j.$ The network eigenmodel of \citet{Hoff2009} supposes that 
\begin{align*}y_{ij} &\mid \pi_{ij} \stackrel{\text{ind.}}{\sim} \text{Bernoulli}\left(\pi_{ij}\right) \\ 
\pi_{ij} &=  \Phi\left\{c + \left(Q \Lambda Q^\top\right)_{ij}\right\}
\end{align*}
where $\Phi$ is the cumulative distribution function of a standard normal random variable and $(c, Q, \Lambda)$ are unknown parameters. The parameter $Q$ is a $p \times k$ semi-orthogonal matrix, $\Lambda=\text{diag}(\lambda_1, \ldots, \lambda_k)$ is a $k \times k$ diagonal matrix, and $c$ is a real number. 
\end{example}


In applications, there is often reason to assume that a semi-orthogonal matrix parameter satisfies a structural assumption such as sparsity or smoothness. If each $y_i$ from Example~\ref{ex:SVD} contains measurements of some variable over a uniform, one-dimensional grid of points in time or space, then it may be reasonable to assume that the columns of $V$ will resemble discretizations of smooth curves. If the scientific context suggests that $y_{ij}$ should only relate to a subset of the latent factors in $u_i,$ then it may be reasonable to assume that $V$ is sparse. Such an assumption typically leads to more interpretable results. In some high-dimensional settings, making structural assumptions regarding a semi-orthogonal matrix parameter allows us to make meaningful inferences that would otherwise be impossible.

In Bayesian statistics, structural assumptions are incorporated into an analysis through the prior distribution. The vast majority of articles on Bayesian methodology for models with semi-orthogonal matrix parameters assign the semi-orthogonal matrix parameter a uniform prior. There are a handful of exceptions that propose non-uniform priors for structured semi-orthogonal matrices. \citet{Yoshida2010} introduced a prior for a sparse semi-orthogonal factor loadings matrix and a computational method combining variational inference with simulated annealing to find the posterior mode. \citet{Gao2015} proposed a prior for a sparse semi-orthogonal matrix and analyzed its theoretical properties in the context of sparse PCA. The authors described a method for computing the posterior mean, but the discrete parameter space rendered posterior simulation intractable. \citet{Pourzanjani2021} combined the Givens angle representation of a semi-orthogonal matrix \citep{Shepard2015} and the horseshoe prior \citep{carvalho2010horseshoe} to construct a prior for sparse PCA. \citet{Jauch2021a} leveraged the matrix angular central Gaussian (MACG) distribution \citep{Chikuse1990} to define a prior for a semi-orthogonal matrix parameter with smooth structure and conducted posterior simulation via parameter-expanded Markov chain Monte Carlo (MCMC). \citet{north2024flexible} developed a prior for a semi-orthogonal matrix parameter with column-specific smooth structure and demonstrated its value in applications to spatial data. In principle, the matrix Bingham–von Mises–Fisher family of distributions \citep{Khatri1977, Hoff2009}, which is well established in the directional statistics literature, could provide non-uniform priors for structured semi-orthogonal matrices. In practice, the normalizing constants can be prohibitively difficult to evaluate, limiting their use in hierarchical models. The limited extent of the literature discussing priors for structured semi-orthogonal matrices is due, in part, to the difficulty of defining probability distributions on the Stiefel manifold which reflect such prior information and lead to tractable posterior inference.  


This article introduces a family of prior distributions for structured semi-orthogonal matrices that leads to tractable posterior inference via the parameter-expanded MCMC methodology of \citet{Jauch2021a}. The family of priors, which includes the uniform distribution and a class of MACG distributions as particular cases, is defined by projecting an unconstrained real-valued random matrix onto the Stiefel manifold. We draw upon recent results from random matrix theory to establish that aspects of the distribution of the real-valued matrix are inherited by the distribution of the semi-orthogonal matrix. Thus, if we want the prior distribution for a semi-orthogonal matrix parameter to reflect structural assumptions, we can build those assumptions into the distribution of the real-valued matrix. We introduce specific prior distributions for incorporating sparsity or smoothness and illustrate their use through applications to biological and oceanographic data. 


We now outline the remainder of the article. Section 2 reviews important preliminaries, including the polar decomposition, the MACG distribution, and the parameter-expanded MCMC methodology of \citet{Jauch2021a}. Section 3 introduces the proposed family of prior distributions and presents theoretical results demonstrating that aspects of the distribution of the real-valued matrix are inherited by the distribution of the semi-orthogonal matrix. Section 4 discusses the details of incorporating sparse or smooth structure with the proposed family of prior distributions. As part of this discussion, we describe a novel continuous shrinkage prior for the latent real-valued random matrix introduced in the construction of the prior for the semi-orthogonal matrix. Section 5 considers applications to biological and oceanographic data. We conclude in Section 6 with a discussion. 

\section{Preliminaries} \label{sec:preliminaries} 

The proposed prior distribution is related to the MACG distribution of \citet{Chikuse1990} and leads to tractable posterior simulation via polar expansion, the parameter-expanded MCMC method of \citet{Jauch2021a}. Both the MACG distribution and polar expansion are built upon the polar decomposition. 
In this section, we briefly review each of these topics. The discussion of these preliminaries draws upon \citet{Jauch2021a}. 


\paragraph{The polar decomposition.} The polar decomposition is the unique representation of a full rank matrix $X \in \mathbbm{R}^{p\times k}$ as the product $X = Q_{X} S_X^{1/2}$ where $Q_{X} \in \mathcal{V}(k,p),$ $S_X$ is a $k \times k$ symmetric positive definite (SPD) matrix, and $S_X^{1/2}$ is the symmetric square root of $S_X.$ The components of the polar decomposition can be computed from $X$ as $Q_{X} = X(X^\top X)^{-1/2}$ and $S_{X}=X^\top X.$ In terms of the singular value decomposition $X=U D V^\top,$ we have $Q_{X}=U V^\top$ and $S_X^{1/2}=VDV^\top.$ The semi-orthogonal component $Q_{X}$ can be understood as the projection of $X$ onto $\mathcal{V}(k,p).$ More precisely, $Q_{X}$ is the closest matrix in $\mathcal{V}(k,p)$ to $X$ in the Frobenius norm, i.e. $Q_{X} = \text{argmin}_{Q\in \mathcal{V}(k,p)} \|X - Q\|_{F}$ \citep{higham1990fast}. 

\paragraph{The MACG distribution.} The random semi-orthogonal matrix $Q \in \mathcal{V}(k,p)$ is said to have a matrix angular central Gaussian $\text{MACG}( \Sigma)$ distribution if $Q \stackrel{d}{=} Q_{X}$ where $X\sim N_{p,k}(0,\Sigma,I_k)$ \citep{Chikuse2003}. The notation $N_{p,k}( 0, \Sigma, I_k)$ indicates a mean-zero matrix normal distribution  with $\Sigma$ as its $p \times p$ row covariance matrix and the identity as its $k \times k$ column covariance matrix \citep{Srivastava1979, Dawid1981}. We can think of the MACG distribution as resulting from the projection of a mean-zero normal random matrix onto the Stiefel manifold. The $\text{MACG}(\Sigma)$ distribution has density $f (Q \mid \Sigma) =| \Sigma |^{-k/2}| Q^\top \Sigma^{-1} Q |^{-p/2}$ and is uniform on the Stiefel manifold when $\Sigma = I_p$ or $k=p.$


\paragraph{Posterior inference via polar expansion.} 
Among other problems, \citet{Jauch2021a} considered MCMC simulation from the Stiefel manifold when the target distribution is a posterior distribution arising from a $\text{MACG}(\Sigma)$ prior. Suppose we have data $y$ whose distribution given the unknown parameter $Q \in \mathcal{V}(k,p)$ has density $p(y \, \vert \, Q).$ The prior density is $p(Q)= |\Sigma|^{-k/2}|Q^\top\Sigma^{-1}Q|^{-p/2}.$ 
The posterior density satisfies $p(Q\, \vert \, y) \propto p(y \, \vert \, Q) \, p(Q).$ 

To avoid the complications of direct simulation from the posterior density $p(Q\, \vert \, y)$ defined on the Stiefel manifold, we can expand the parameter space by working with the latent real-valued random matrix $X$ implied by the definition of the MACG distribution. The prior distribution for $X$ is the mean-zero matrix normal distribution $N_{p,k}(0,\Sigma, I).$ The posterior density satisfies 
\begin{align}
    p(X \mid y) &\propto p(y \mid Q_{X}) \, N_{p,k}(X \mid 0, \Sigma, I_k). \label{MACGposterior}
\end{align} 
To obtain an MCMC approximation to the posterior distribution of $Q,$ we can simulate a Markov chain $X_1,\dots, X_T$ on $\mathbbm{R}^{p\times k}$ whose stationary distribution has density \eqref{MACGposterior}, calculate $Q_t = X_t(X_t^\top X_t)^{-1/2}$ for each iteration $t,$ and take the empirical distribution of $Q_1,\dots, Q_T$ as our approximation. 
The posterior distribution of $X$ is non-standard, but knowing its density (up to a constant factor) allows us to apply standard MCMC methods, e.g. adaptive Hamiltonian Monte Carlo (HMC) \citep{Neal2011, Hoffman2014} as implemented in Stan \citep{Carpenter2017} . 

An important idea, not pursued or discussed in \citet{Jauch2021a}, is that the same approach to posterior inference is available for other priors defined by projecting an unconstrained real-valued random matrix $X$ onto the Stiefel manifold, including the family of prior distributions introduced in this article. We simply replace the mean-zero matrix normal density in \eqref{MACGposterior} with an alternative prior density for $X.$ This computational tractability 
is an important advantage of the proposed family of prior distributions. 

\section{The family of prior distributions and its properties} \label{sec:the_family_and_properties}


We now introduce the proposed family of prior distributions. Let $Z=(Z_{ij})$ be a $p \times k$ matrix with i.i.d. real entries having mean zero and unit variance; let $\Omega$ be a $p \times p$ correlation matrix; and set $  X = \Omega^{1/2}  Z.$ The proposed prior is the distribution of $ Q_X,$ the projection of $X$ onto the Stiefel manifold $\mathcal{V}(k,p)$ obtained via the polar decomposition. There are two notable special cases.  If the entries of $Z$ are independent standard normal random variables, then $Q_X \sim \text{MACG}(\Omega).$ If, additionally, we have that $\Omega = I_p,$ then $Q_X$ is uniform on $\mathcal{V}(k,p).$

In the remainder of this section, we present results demonstrating that properties of the distribution of $X$ are inherited by the distribution of $Q_X.$ Theorem~\ref{prop:invariance} establishes that invariance properties of $X$ are shared by $Q_X.$ Theorem~\ref{thm:limitingdistribution} implies that the Wasserstein distance between the distribution of $m$ entries of $\sqrt{p}Q_X$ and the distribution of the corresponding entries of $X$ converges to zero as $k\to \infty$ and $p = p(k) \to\infty$ with $k/p \to 0,$ provided that $m = m(k) = o\left(p/k\right).$ 
These results have important implications for statistical modeling. If we want the prior distribution for a semi-orthogonal matrix parameter to reflect structural assumptions, we can build these features into the distribution of $X.$ Theorem~\ref{thm:limitingdistribution} is particularly relevant to prior specification in the common scenario in which a semi-orthogonal matrix parameter has far more rows than columns. Beyond its relevance to statistical modeling, Theorem~\ref{thm:limitingdistribution} may be of independent interest to those working in high-dimensional probability and random matrix theory. 

\subsection{Invariance properties} \label{se:invariance}


Let $\mathcal{L} \subseteq \mathcal{O}(p)$ and $\mathcal{R} \subseteq \mathcal{O}(k)$ where $\mathcal{O}(p)$ and $\mathcal{O}(k)$ are the orthogonal groups in dimensions $p$ and $k.$ A $p \times k$ random matrix $Y$ is invariant to left multiplication by elements of $\mathcal{L}$ if $L Y \stackrel{d}{=} Y$ for all $L \in \mathcal{L}$. The random matrix $Y$ is invariant to right multiplication by elements of $\mathcal{R}$ if $Y R \stackrel{d}{=} Y$ for all $R \in \mathcal{R}$. Finally, the random matrix $Y$ is invariant to left multiplication by elements of $\mathcal{L}$ and right multiplication by elements of $\mathcal{R}$ if $L Y R \stackrel{d}{=} Y$ for all $L \in \mathcal{L}$ and  $R \in \mathcal{R}$. 

\begin{theorem}[Invariance] \label{prop:invariance}
If the random matrix $X$ is invariant to left multiplication by elements of $\mathcal{L}$ and right multiplication by elements of $\mathcal{R},$ then so is $Q_{X}$.
\end{theorem}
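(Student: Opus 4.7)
The plan is to show that the polar-decomposition map $X \mapsto Q_X$ is equivariant under orthogonal transformations, and then transport the invariance hypothesis on $X$ to $Q_X$ via this equivariance. Concretely, the key computation is to verify that for every $L \in \mathcal{O}(p)$ and every $R \in \mathcal{O}(k),$
\begin{equation*}
Q_{L X R} \;=\; L\, Q_X\, R
\end{equation*}
almost surely on the set where the polar decompositions exist.

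First I would check left equivariance, which is immediate from the formula $Q_X = X(X^\top X)^{-1/2}$: since $(LX)^\top(LX) = X^\top L^\top L X = X^\top X$, one has $Q_{LX} = LX(X^\top X)^{-1/2} = L Q_X.$ Right equivariance is the only step requiring any care. Writing $S_X = X^\top X$ and diagonalizing $S_X = U \Lambda U^\top$, the matrix $R^\top S_X R = (R^\top U)\Lambda(R^\top U)^\top$ is the eigendecomposition of $R^\top S_X R$, so its unique symmetric positive square root equals $R^\top S_X^{1/2} R$ and hence $(R^\top S_X R)^{-1/2} = R^\top S_X^{-1/2} R.$ Substituting,
\begin{equation*}
Q_{XR} \;=\; XR\,\bigl((XR)^\top(XR)\bigr)^{-1/2} \;=\; XR\, R^\top S_X^{-1/2} R \;=\; X S_X^{-1/2} R \;=\; Q_X R.
\end{equation*}
Composing the two yields $Q_{L X R} = L Q_X R$.

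With equivariance in hand, the invariance claim is a one-line consequence. The map $X \mapsto Q_X$ is a Borel-measurable function on the set of full-rank $p \times k$ matrices, and under the standing setup $X = \Omega^{1/2} Z$ with $\Omega$ a correlation matrix and $Z$ continuously distributed, $X$ is full rank almost surely, so $Q_X$ is well defined with probability one. For any $L \in \mathcal{L}$ and $R \in \mathcal{R}$, the hypothesis gives $L X R \stackrel{d}{=} X$; applying the measurable equivariant map to both sides yields
\begin{equation*}
L\, Q_X\, R \;=\; Q_{L X R} \;\stackrel{d}{=}\; Q_X,
\end{equation*}
which is the desired invariance of $Q_X$ under left multiplication by $\mathcal{L}$ and right multiplication by $\mathcal{R}.$

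I do not anticipate a substantive obstacle: the only genuine content is the identity $(R^\top S_X R)^{1/2} = R^\top S_X^{1/2} R,$ which follows from the uniqueness of the symmetric positive square root of an SPD matrix, and a brief remark that the almost-sure full-rankness of $X$ makes the polar decomposition globally defined up to a null set so that the equality in distribution transfers cleanly.
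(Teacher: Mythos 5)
Your proof is correct and follows essentially the same route as the paper's: both establish the equivariance identity $Q_{LXR} = L\,Q_X R$ via the uniqueness of the symmetric square root (the paper through the eigendecomposition $X^\top X = VDV^\top$, you through the equivalent observation $(R^\top S_X R)^{-1/2} = R^\top S_X^{-1/2}R$), and then transfer the distributional invariance through this measurable map. If anything, your write-up is slightly more careful than the paper's in separating the deterministic equivariance from the final equality-in-distribution step.
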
 

Theorem~\ref{prop:invariance} characterizes how invariance properties of the distribution of $X$ influence those of the distribution of $Q_X$ and thus has important implications for choosing an appropriate prior from the proposed family. The uniform distribution on $\mathcal{V}(k,p)$ is the unique distribution which is invariant to left multiplication by elements of $\mathcal{O}(p)$ and right multiplication by elements of $\mathcal{O}(k).$ Theorem~\ref{prop:invariance} implies that our prior distribution is uniform when $X$ is invariant to left multiplication by elements of $\mathcal{O}(p)$ and right multiplication by elements of $\mathcal{O}(k).$ When we expect a semi-orthogonal matrix parameter to have structure such as sparsity or smoothness, a uniform prior is inappropriate, but less stringent invariance properties may still be desirable. For example, if row indices provide no substantive information, our prior distribution should be invariant to left multiplication by permutation matrices. If, additionally, there is no prior information regarding the signs of its entries, our prior distribution should be invariant to left multiplication by signed permutation matrices. Theorem~\ref{prop:invariance} tells us how to achieve this. 

\subsection{Wasserstein approximation} \label{se:wass_approx}


Theorem~\ref{thm:limitingdistribution} considers the Wasserstein distance between the distribution of $m$ entries of $\sqrt{p}Q_X$ and the distribution of the corresponding entries of $X$ as $k$ and $p$ grow. An important implication is that the Wasserstein distance goes to zero as $k \to \infty$ and $p=p(k)\to \infty$ with $k/p \to 0,$ provided that $m=m(k) = o\left(p/k\right).$ 

\begin{definition} \label{def:Wasserstein}
Let $(\mathcal{X}, d)$ be a Polish metric space. The Wasserstein distance of order $\ell \in [1, \infty)$ between two probability measures $\mu, \nu$ on $\mathcal{X}$ with finite $\ell$th moments is defined as
\begin{equation}
W_{\ell}(\mu, \nu) = \left\{ \inf_{\gamma \in \Gamma(\mu, \nu)} \int_{\mathcal{X} \times \mathcal{X}} d(x,y)^{\ell} \, d\gamma(x, y) \right\}^{1/\ell}
\end{equation}
where \(\Gamma(\mu, \nu)\) is the set of all joint distributions (or couplings) \(\gamma\) with marginals \(\mu\) and \(\nu\).
\end{definition}
\noindent The Wasserstein distance metrizes weak convergence \cite[Theorem 6.9]{Villani2009}. In our results, we consider the Wasserstein distance of order $\ell=2$ with respect to 
the Euclidean norm. 


To emphasize that we are considering an asymptotic setting in which the dimensions of the random matrices and, potentially, the number of elements selected grow, we will include subscripts. For example, we will denote $X$ as $X_k,$ $Q_{X}$ as $Q_{X_k},$ and $\Omega$ as $\Omega_p.$ It will be implicit that $p$ and $m$ grow as functions of $k.$ 


Recall that $X_k = \Omega_{p}^{1/2} Z_k.$ We impose the following conditions on the matrices  $Z_k$ and $\Omega_{p}:$
\begin{condition} \label{ass:moments}
The entries of $Z_k \in \RR^{p \times k}$ are i.i.d. with mean zero, unit variance,  and $\E |Z_{ij}|^{4 + \delta} < \infty$ for some $\delta > 0$.
\end{condition}
%
\begin{condition} \label{ass:Omega}
 The matrix $\Omega_{p} \in \RR^{p \times p}$ is a correlation matrix for which, as $p \to \infty$, 
 \begin{itemize}
\item the quantity $c_{\Omega}(p) := p^{-1}\tr\left(\Omega_{p}^2\right)$ converges to a constant $\bm{c}_{\Omega} >0,$ and 
\item the spectral norm $\|\Omega_p\|$ is bounded above.
\end{itemize}
\end{condition}
\noindent Conditions similar to these are common in the random matrix theory literature. Proposition~\ref{prop:Basic_facts} in Section~\ref{se:Incorporating_Structure} verifies that Condition~\ref{ass:moments} holds for the novel shrinkage prior proposed in Section~\ref{se:Incorporating_Structure}. Proposition~\ref{prop:corr} in Section~\ref{se:Incorporating_Structure} verifies that Condition~\ref{ass:Omega} holds for a variety of different sequences of correlation matrices.

Before stating Theorem~\ref{thm:limitingdistribution}, we must introduce additional notation related to the selection of $m$ entries from the matrices $\sqrt{p}Q_{X_k}$ and $X_k.$  For each $j \in \{1, \ldots, k\},$ let $\mathcal{M}^{(j)}_k \subseteq \{1, \ldots, p\}$ contain the row indices of the entries to be selected from column $j$ of $\sqrt{p}Q_{X_k}$ and $X_k.$ Setting $m_k^{(j)} = \left|\mathcal{M}_k^{(j)}\right|,$ we have that $m = \sum_{j=1}^{k} m_k^{(j)}.$ For each $j \in \{1, \ldots, k\}$ such that $\mathcal{M}^{(j)}_k \neq \emptyset,$ define $D_{\mathcal{M}_k^{(j)}}$ as the $m_k^{(j)} \times p$ matrix obtained from the identity matrix $I_p$ by removing row $i$ if $i \notin \mathcal{M}_k^{(j)}.$ Let $e_{j}$ denote the $j$th standard unit vector in $\RR^{k},$ and define the matrix
\begin{align*}
\Delta_{\mathcal{M}_k^{(1)}, \ldots, \mathcal{M}_k^{(k)}} = 
\begin{bmatrix}
e_1^\top \otimes D_{\mathcal{M}_k^{(1)}} \\ 
e_2^\top \otimes D_{\mathcal{M}_k^{(2)}} \\ 
\ldots \\ 
e_k^\top \otimes D_{\mathcal{M}_k^{(k)}}
\end{bmatrix}.
\end{align*}
(If $\mathcal{M}^{(j)}_k = \emptyset$ so that $D_{\mathcal{M}_k^{(j)}}$ is undefined, we omit the block $e_j^\top \otimes D_{\mathcal{M}_k^{(j)}}$ when constructing $\Delta_{\mathcal{M}_k^{(1)}, \ldots, \mathcal{M}_k^{(k)}}.$) The $m$-dimensional vectors 
\begin{align*}
\Delta_{\mathcal{M}_k^{(1)}, \ldots, \mathcal{M}_k^{(k)}}  \operatorname{vec}\left(X_k\right) \quad  \text{and} \quad  \Delta_{\mathcal{M}_k^{(1)}, \ldots, \mathcal{M}_k^{(k)}} \operatorname{vec}\left(\sqrt{p}Q_{X_k}\right)
\end{align*}
contain the entries of $X_k$ and $\sqrt{p}Q_{X_k}$ specified by the sets $\mathcal{M}_k^{(1)}, \ldots, \mathcal{M}_k^{(k)}.$ 
\begin{theorem}[Wasserstein distance] \label{thm:limitingdistribution}
Suppose that $X_k = \Omega_p^{1/2} Z_k$ with $Z_k$ as in Condition~\ref{ass:moments} and $\Omega_p$ as in Condition~\ref{ass:Omega}. Then 
\begin{align} \label{Theorem2Distance}
W_2\left[\operatorname{law}\left\{\Delta_{\mathcal{M}_k^{(1)}, \ldots, \mathcal{M}_k^{(k)}} \operatorname{vec}\left(\sqrt{p}Q_{X_k}\right)\right\}, \operatorname{law}\left\{\Delta_{\mathcal{M}_k^{(1)}, \ldots, \mathcal{M}_k^{(k)}} \operatorname{vec}\left(X_k\right)\right\}\right] 
=
O\left( 
\sqrt{\frac{m k}{p}} \right)
\end{align}
as $k\to \infty$ and $p=p(k)\to\infty$ with $k/p \to 0.$
\end{theorem}
\noindent An immediate consequence of \eqref{Theorem2Distance} is that the Wasserstein distance goes to zero as $k \to \infty$ and $p=p(k) \to\infty$ with $k/p\to 0,$ provided that $m=m(k) = o\left(p/k\right).$ To make the implications of Theorem~\ref{thm:limitingdistribution} more concrete, we will consider specific asymptotic scenarios in Example~\ref{ex:asymptotic_sparse} and Example~\ref{ex:asymptotic_smooth} of Section~\ref{se:Incorporating_Structure}, after introducing prior distributions for semi-orthogonal matrix parameters with sparse or smooth structure.

There is significant interest in normal approximations to the entries of random orthogonal or semi-orthogonal matrices in the random matrix theory community. For a detailed review, see Section 6 of \citet{Jauch2020} and the references therein. We highlight an especially close connection between Theorem~\ref{thm:limitingdistribution} and this literature. A result of \citet{Stam1982} showed that, after rescaling by $\sqrt{p},$ the distribution of $m$ entries of a uniformly distributed unit vector 
can be approximated by the distribution of $m$ independent standard normal random variables as the length $p$ of the unit vector grows. \citet{Watson1983} extended this result to a uniformly distributed semi-orthogonal matrix with a growing number of rows but a fixed number of columns. Theorem~\ref{thm:limitingdistribution} offers a new perspective on this topic. 
It shows that the results of \citet{Stam1982} and \citet{Watson1983} describe special cases of a more general phenomenon. That is, the distribution of $m$ entries of the projection $Q_{X_k}$ of a real random matrix $X_k$ onto the Stiefel manifold $\mathcal{V}(k,p)$ can, after rescaling by $\sqrt{p},$ be approximated asymptotically by the distribution of the corresponding entries of $X_k,$ not only when the entries of $X_k$ are i.i.d. standard normal random variables, but whenever $X_k$ satisfies Condition \ref{ass:moments} and Condition \ref{ass:Omega}. 

Proposition 1 in the supplementary material, while not directly relevant to constructing prior distributions for structured semi-orthogonal matrices, may also be of interest to some readers. It presents a simple, non-asymptotic expression for the squared Wasserstein distance between the distributions of $\sqrt{p}Q_X$ and $X.$ This distance can be written in terms of the eigenvalues of a sample covariance matrix that has been the subject of several recent articles in the random matrix theory literature. 

\section{Incorporating Structure}
\label{se:Incorporating_Structure}

We now discuss specific prior distributions from the proposed family that are appropriate for semi-orthogonal matrix parameters with sparse or smooth structure. Section~\ref{se:sparsity} introduces a novel continuous shrinkage prior for the entries of $Z$ that leads to an approximately sparse $Q_X$ when $k \ll p.$ Section~\ref{se:smoothness} considers different correlation matrices $\Omega$ for incorporating row dependence, including some that lead to a $Q_X$ with smooth structure, and verifies that Condition~\ref{ass:Omega} holds for sequences of these correlation matrices.

\subsection{Sparsity}
\label{se:sparsity}

To construct a prior distribution for a sparse semi-orthogonal matrix, we will set $\Omega = I_p$ and let the entries of $Z$ be i.i.d. from a continuous shrinkage prior with mean zero, unit variance, and finite fourth moment. Then, by Theorem~\ref{thm:limitingdistribution}, $Q_{X}$ will inherit the approximate sparsity of $X$ in the high-dimensional setting where $k \ll p.$ 

Condition~\ref{ass:moments} rules out many choices for the continuous shrinkage prior. For example, global-local shrinkage priors 
such as the horseshoe \citep{carvalho2010horseshoe}, generalized double Pareto \citep{armagan2013generalized}, or Dirichlet-Laplace \citep{Bhattacharya2015}
introduce dependence through the global scale parameter and, in many cases, have infinite variance and fourth moments. These observations motivate us to propose a novel continuous shrinkage prior for the entries of $Z$ that satisfies the requirements above and has a number of other appealing properties. In particular, we propose to let 
the entries of $Z$ be i.i.d. real-valued random variables with common density
\begin{align} \label{zdensity}
    f(z \mid \ell) = \frac{(\ell/2)^{\ell/2}}{\Gamma(\ell/2)}|z|^{\ell-1}\exp{(-\ell z^2/2)} 
\end{align}
where $z\in \mathbb{R}$ and $\ell \in (0,1].$

Proposition~\ref{scalemixture} provides a representation of 
this distribution as a scale mixture of normal distributions, which is helpful for establishing its theoretical properties and leads to improved computational efficiency when we consider posterior inference in Section~\ref{sparse_network_eigenmodel}.
\begin{proposition}
\label{scalemixture}
For $\ell \in (0,1),$ let each entry of $Z$ be generated as $$Z_{ij} \, \vert\,  \theta_{ij} \stackrel{\text{ind.}}{\sim} \text{Normal}\left(0, \theta_{ij}/\ell\right)$$ where $\theta_{ij} \stackrel{\text{i.i.d.}}{\sim} \text{Beta}\left[\ell/2, \left(1-\ell\right)/2\right].$ The entries of $Z$ are then i.i.d. with density \eqref{zdensity}.
\end{proposition}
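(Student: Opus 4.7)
The plan is to verify the claim by direct integration: write the marginal density of an entry of $Z$ as a mixture over $\theta$ and show it collapses to the closed form in \eqref{zdensity}. Explicitly, using the Normal$(0,\theta/\ell)$ density and the Beta$(\ell/2,(1-\ell)/2)$ density,
\begin{equation*}
f(z\mid\ell) \;=\; \frac{\sqrt{\ell}}{\sqrt{2\pi}\,B(\ell/2,(1-\ell)/2)}\int_{0}^{1}\theta^{\ell/2-3/2}(1-\theta)^{-(\ell+1)/2}\exp\!\Bigl(-\tfrac{\ell z^{2}}{2\theta}\Bigr)\,d\theta.
\end{equation*}
The substitution $t=\theta^{-1}-1$ (so $\theta=1/(1+t)$, $1-\theta=t/(1+t)$, $d\theta=-dt/(1+t)^{2}$) is engineered so that the exponent becomes $-\ell z^{2}(1+t)/2$, extracting a factor $e^{-\ell z^{2}/2}$. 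A book-keeping exercise with the powers of $(1+t)$ (adding up the contributions from the three $\theta$-factors and the Jacobian) shows they cancel identically, leaving only a pure gamma integral in $t$.

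After the cancellation, the expression reduces to
\begin{equation*}
f(z\mid\ell) \;=\; \frac{\sqrt{\ell}\,e^{-\ell z^{2}/2}}{\sqrt{2\pi}\,B(\ell/2,(1-\ell)/2)}\int_{0}^{\infty}t^{-(\ell+1)/2}e^{-\ell z^{2}t/2}\,dt,
\end{equation*}
and the remaining integral evaluates to $\Gamma((1-\ell)/2)\,(\ell z^{2}/2)^{(\ell-1)/2}$. Then I would invoke the Beta function identity $B(\ell/2,(1-\ell)/2)=\Gamma(\ell/2)\Gamma((1-\ell)/2)/\sqrt{\pi}$ to cancel $\Gamma((1-\ell)/2)$ and combine the constants into $(\ell/2)^{\ell/2}/\Gamma(\ell/2)$, yielding \eqref{zdensity}. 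Since the $Z_{ij}\mid\theta_{ij}$ are conditionally independent and the $\theta_{ij}$ are i.i.d., the unconditional $Z_{ij}$ are i.i.d., which finishes the claim.

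The main obstacle is really just carefully tracking the exponents in the substitution so that the $(1+t)$ factors cancel and the tail integrability at $t\to 0$ holds; the latter requires $\ell<1$, which is exactly the range stated in the Proposition. As a sanity check (and a more conceptual alternative proof I might include as a remark), one can derive the same identity from the Beta--chi-squared decomposition: if $A\sim\chi^{2}_{\ell}$ and $B\sim\chi^{2}_{1-\ell}$ are independent, then $\theta:=A/(A+B)\sim\mathrm{Beta}(\ell/2,(1-\ell)/2)$ is independent of $A+B\sim\chi^{2}_{1}$, and identifying the latter with $N^{2}$ where $N\sim\mathrm{Normal}(0,1)$ gives $\ell Z^{2}=\theta N^{2}\stackrel{d}{=}A\sim\chi^{2}_{\ell}$; symmetry of $Z\mid\theta$ and the Jacobian $f_{Z}(z)=|z|f_{Z^{2}}(z^{2})$ then recover \eqref{zdensity}.
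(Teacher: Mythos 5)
Your proof is correct, and it takes a genuinely different (and in fact reversed) route from the paper's. You start from the generative model and compute the marginal density directly: the substitution $t=\theta^{-1}-1$ does make all powers of $(1+t)$ cancel (the exponents $3/2-\ell/2$, $(1+\ell)/2$, and $-2$ sum to zero), the remaining gamma integral gives $\Gamma((1-\ell)/2)(\ell z^2/2)^{(\ell-1)/2}$, and the constants collapse to $(\ell/2)^{\ell/2}/\Gamma(\ell/2)$ via $B(\ell/2,(1-\ell)/2)=\Gamma(\ell/2)\Gamma((1-\ell)/2)/\sqrt{\pi}$, recovering \eqref{zdensity}; the restriction $\ell<1$ is exactly what integrability at $t\to 0$ demands. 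The paper instead works backwards from the density: it invokes the Andrews--Mallows criterion (complete monotonicity of $y\mapsto y^{(\ell-1)/2}e^{-\ell y/2}$) to establish that \emph{some} Gaussian scale mixture representation exists, then identifies the mixing law by computing its Laplace transform through generalized inverse Gaussian and Bessel-function identities, finally recognizing a beta prime and hence a beta. Your forward verification is more elementary and self-contained, which is all the proposition as stated requires; the paper's derivation buys a constructive account of how the mixing distribution is found rather than merely checked. Your remark via the beta--chi-squared decomposition ($\ell Z^2=\theta N^2\stackrel{d}{=}\chi^2_\ell$) is also correct and neatly anticipates Part IV of Proposition \ref{prop:Basic_facts}.
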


Proposition~\ref{prop:Basic_facts} presents some other important properties of the distribution. 
\begin{proposition}
\label{prop:Basic_facts}
The following properties hold: 
\begin{enumerate}[label=\Roman*.,ref=Part \Roman*.]
\item When $\ell = 1,$ $f(z \mid \ell)$ is the standard normal density. 
As $\ell \to 0,$ the kernel $z \mapsto |z|^{\ell-1}\exp{(-\ell z^2/2)}$ of the density \eqref{zdensity} converges pointwise to the function $z \mapsto \left|z\right|^{-1}.$
\label{item1.P3}
\item For $\ell \in (0,1),$ $\lim_{z \to 0} f(z \mid \ell) = \infty.$
\label{item2.P3}
\item The entries of the random matrix $Z$ are i.i.d. with mean zero, unit variance, and finite fourth moment. 
\label{item3.P3}
\item The squared entries of $Z$ are distributed $Z_{ij}^2 \stackrel{\text{i.i.d.}}{\sim} \text{Gamma}(\text{shape} =\ell/2, \,\text{scale} = 2/\ell).$
\label{item4.P3}
\end{enumerate}
\end{proposition}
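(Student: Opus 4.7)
The plan is to verify the four parts largely by direct computation, reorganizing slightly so that Part IV is established before Part III and can be reused there. All four parts follow from elementary properties of the density \eqref{zdensity} together with the scale-mixture representation of Proposition \ref{scalemixture}, which we may invoke since it appears earlier.

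For Part I, I would simply substitute $\ell = 1$: since $\Gamma(1/2) = \sqrt{\pi}$ and $(1/2)^{1/2} = 1/\sqrt{2}$, the leading constant in \eqref{zdensity} becomes $1/\sqrt{2\pi}$, the factor $|z|^{\ell - 1}$ becomes $1$, and the exponential becomes $\exp(-z^2/2)$, recovering the standard normal density. For the limit as $\ell \to 0$, I would observe that for each fixed $z \neq 0$, $|z|^{\ell - 1} \to |z|^{-1}$ and $\exp(-\ell z^2/2) \to 1$, giving the claimed pointwise limit of the kernel $|z|^{\ell-1} \exp(-\ell z^2/2)$. Part II is immediate: for $\ell \in (0,1)$, the exponent $\ell - 1$ is negative, so $|z|^{\ell-1} \to \infty$ as $z \to 0$, while $\exp(-\ell z^2/2) \to 1$ and the prefactor is a finite positive constant.

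For Part IV, I would perform the change of variables $W = Z^2$. Using the symmetry of $f(z \mid \ell)$ and $dw = 2|z|\,dz$, the density of $W$ is
\begin{equation*}
f_W(w \mid \ell) = \frac{(\ell/2)^{\ell/2}}{\Gamma(\ell/2)}\, w^{(\ell-1)/2} \exp(-\ell w/2) \cdot w^{-1/2}
= \frac{(\ell/2)^{\ell/2}}{\Gamma(\ell/2)}\, w^{\ell/2 - 1} \exp(-\ell w / 2),
\end{equation*}
for $w > 0$, which I recognize as the $\text{Gamma}(\text{shape} = \ell/2,\, \text{scale} = 2/\ell)$ density. Independence across $(i,j)$ is inherited from independence of the entries of $Z$.

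For Part III, mean zero is immediate from the symmetry $f(z \mid \ell) = f(-z \mid \ell)$, combined with the finiteness of $\E|Z_{ij}|$ (which itself follows from Part IV since all Gamma moments are finite). Using Part IV with $W = Z_{ij}^2 \sim \text{Gamma}(\ell/2, 2/\ell)$, the standard Gamma moment formulas give $\E[Z_{ij}^2] = \E[W] = (\ell/2)(2/\ell) = 1$ and $\E[Z_{ij}^4] = \E[W^2] = \mathrm{Var}(W) + (\E W)^2 = 2/\ell + 1$, which is finite for every $\ell \in (0,1]$. Independence across entries is part of the assumption.

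I do not anticipate any serious obstacles; the computations are routine. The only step requiring mild care is the change-of-variables factor in Part IV (tracking the factor of $|z|^{-1}$ from the Jacobian, and the factor of $2$ from combining the two halves of $\mathbb{R}$), after which the Gamma identification and the moment computations in Parts III and IV fall out by inspection.
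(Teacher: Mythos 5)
Your proposal is correct, and all four parts go through as you describe. For Parts I, II, and IV you are simply supplying the routine details that the paper's proof declares ``immediate'' or ``a change of variables''; your Jacobian bookkeeping for Part IV is right and the identification with the $\text{Gamma}(\ell/2,\,2/\ell)$ density checks out. Where you genuinely diverge from the paper is Part III. The paper splits into the cases $\ell=1$ and $\ell\in(0,1)$ and, in the latter case, invokes the scale-mixture representation of Proposition \ref{scalemixture}: it computes $\E|Z_{ij}|^r$ for $r=2,4$ by the law of total expectation, plugging in the Gaussian conditional moments and the raw moments of the $\text{Beta}[\ell/2,(1-\ell)/2]$ mixing distribution. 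You instead reorder the argument so that Part IV comes first and read the moments off the Gamma law of $Z_{ij}^2$, getting $\E Z_{ij}^2=1$ and $\E Z_{ij}^4=1+2/\ell$ (the same value the paper obtains), with mean zero following from symmetry of the density plus integrability. Your route is more elementary and self-contained --- it needs only the change of variables, not Proposition \ref{scalemixture}, and it handles all $\ell\in(0,1]$ uniformly without case-splitting --- whereas the paper's route showcases the scale-mixture representation that it also relies on for computation in Section \ref{sparse_network_eigenmodel}. Both are valid; neither has a gap.
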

\noindent Property I tells us that the proposed distribution includes the standard normal distribution (when $\ell = 1$) and the improper normal-Jeffreys prior (as $\ell \to 0$) as limiting cases. As discussed earlier, when the entries of $Z$ are i.i.d. standard normal random variables, $Q_{X}$ is uniform on $\mathcal{V}(k,p).$ The normal-Jeffreys prior was studied in \citet{Figueiredo2003} and \citet{Bae2004} as a means of obtaining sparse estimates in regression problems. Property II indicates that, like the univariate horseshoe density, the density \eqref{zdensity} is unbounded at zero. Property III confirms that the proposed continuous shrinkage prior satisfies the requirements we stated at the beginning of this subsection. Proposition IV establishes that the squared entries of $Z$ are independent gamma random variables with a common rate parameter, which allows us to derive the marginal distribution of the entries of $Q_{X}$ when $k=1$ in Example~\ref{ex:marginal}.

\begin{example} \label{ex:marginal} Suppose that $Z$ and $Q_{X}$ each have a single column so that $Z=(z_1, \dots, z_p)^\top$ and $Q_{X}=(q_1, \dots, q_p)^\top.$ In that case, the squared entries of $Q_{X}$ have joint distribution $\left(q_1^2, \dots, q_p^2\right)^\top \sim \text{Dirichlet}(\ell/2, \dots, \ell/2)$ and marginal distribution $q_i^2 \sim \text{Beta}\left(\alpha, \beta\right)$ where $\alpha = \ell/2$ and $\beta = (p-1)\ell/2.$ The entries of $Q_{X}$ themselves have marginal density
\begin{align*}
 q \mapsto \frac{1}{B(\alpha, \beta)}|q|^{2\alpha-1}(1 - q^2)^{\beta-1}, \quad -1 < q < 1.
\end{align*} 
Figure~\ref{fig:marginalcomparison} compares the densities of $z_i$ with those of $\sqrt{p}q_i$ for $\ell=.1$ when $p=5$ and $p=100.$ In the high-dimensional case when $p=100,$ these densities are indistinguishable, as we might expect based on Theorem~\ref{thm:limitingdistribution}. 
\end{example}

\begin{example} \label{ex:sparse_figure}
Figure~\ref{fig:sparse} compares the columns of a single realization $X^*$ of $X$ with those of $\sqrt{p}\,Q_{X^*}$ for the sparsity-inducing prior with $\ell = .1$ and dimensions $p=100, k=3.$ The entries of $X^*$ appear as gray dots while the entries of $\sqrt{p}\,Q_{X^*}$ appear as black circles. 
\end{example}

\begin{example} \label{ex:asymptotic_sparse}
We consider the proposed prior for a sparse semi-orthogonal matrix in the asymptotic setting of Theorem~\ref{thm:limitingdistribution}. For each $k,$ let $Z_k$ be a $p \times k$ matrix whose real-valued entries are i.i.d. with density~\eqref{zdensity}, and let $\Omega_p = I_p.$ Then set $X_k = \Omega_p^{1/2}Z_k$ and $Q_{X_{k}} = X_k\left(X_k^\top X_k\right)^{-1/2}.$ For concreteness, we will examine the asymptotic behavior of three entries of $\sqrt{p}Q_{X_k}$ and $X_k:$ those with row and column indices $(1,1), (3,1)$, and $(2,2).$ We can select these entries from $\sqrt{p}Q_{X_k}$ and $X_k$ using the matrix $\Delta_{\mathcal{M}_k^{(1)}, \ldots, \mathcal{M}_k^{(k)}}$ introduced in Section~\ref{se:wass_approx} by letting $\mathcal{M}_k^{(1)}=\{1,3\}$ and $\mathcal{M}_k^{(2)}=\{2\}$ with $\mathcal{M}_k^{(j)}=\emptyset$ for $j = 3, \ldots, k.$ The vectors $\Delta_{\mathcal{M}_k^{(1)}, \ldots, \mathcal{M}_k^{(k)}} \operatorname{vec}\left(\sqrt{p}Q_{X_{k}}\right)$ and $\Delta_{\mathcal{M}_k^{(1)}, \ldots, \mathcal{M}_k^{(k)}} \operatorname{vec}\left(X_k\right)$ contain the desired entries. By construction, $\Delta_{\mathcal{M}_k^{(1)}, \ldots, \mathcal{M}_k^{(k)}} \operatorname{vec}\left(X_k\right) \stackrel{d}{=} \varphi$ where $\varphi$ is a random vector containing three i.i.d. entries, each with density \eqref{zdensity}. Proposition~\ref{prop:Basic_facts} establishes that Condition~\ref{ass:moments} of Theorem~\ref{thm:limitingdistribution} is met, and it is easy to verify that Condition~\ref{ass:Omega} of Theorem~\ref{thm:limitingdistribution} is met as well. Thus, Theorem~\ref{thm:limitingdistribution} implies that
\begin{align} \label{wassconvergence_sparse}
W_2\left[\operatorname{law}\left\{\Delta_{\mathcal{M}_k^{(1)}, \ldots, \mathcal{M}_k^{(k)}} \operatorname{vec}\left(\sqrt{p}Q_{X_k}\right)\right\}, \operatorname{law}\left\{\varphi\right\}\right] \to 0
\end{align}
as $k \to \infty$ and $p=p(k) \to \infty$ provided that $k/p \to 0.$ Because the Wasserstein distance metrizes weak convergence, we can conclude from \eqref{wassconvergence_sparse} that
\begin{align*}
\Delta_{\mathcal{M}_k^{(1)}, \ldots, \mathcal{M}_k^{(k)}} \operatorname{vec}\left(\sqrt{p}Q_{X_k}\right) \stackrel{d}{\to} \varphi
\end{align*}
as $k \to \infty$ and $p=p(k) \to \infty$ provided that $k/p \to 0.$
\end{example}

\begin{figure}[H]  
\centerline{\includegraphics[width=\linewidth]{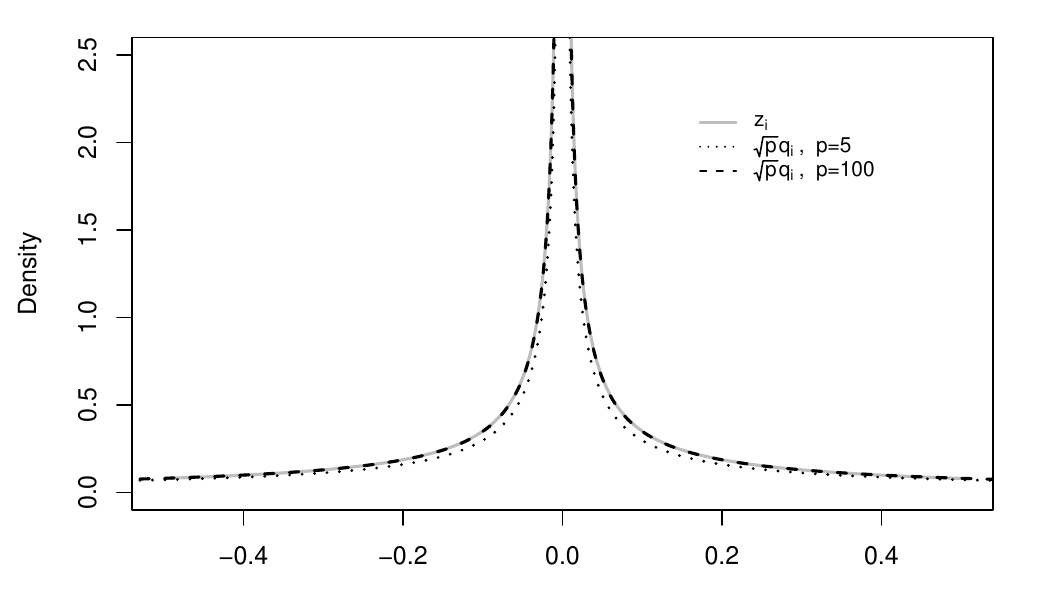}}
\caption{A comparison of the marginal densities of $z_i$ and $\sqrt{p}q_i$ for $\ell=.1$ when $p=5$ and $p=100,$ as discussed in Example~\ref{ex:marginal}.}
\label{fig:marginalcomparison}
\end{figure}

\begin{figure}[H]
\centerline{\includegraphics[width=\linewidth]{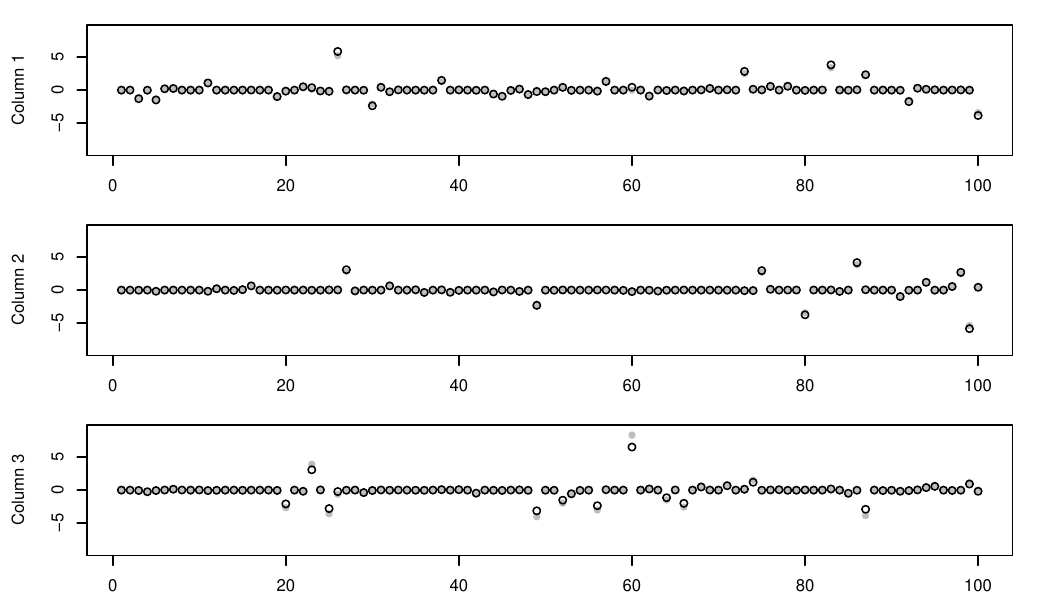}}
\caption{A comparison of the columns of a single realization $X^*$ of $X$ with those of $\sqrt{p}\,Q_{X^*}$ for the sparsity-inducing prior with $\ell = .1$ and dimensions $p=100, k=3$ as described in Example~\ref{ex:sparse_figure}. The entries of $X^*$ appear as gray dots while the entries of $\sqrt{p}\,Q_{X^*}$ appear as black circles.}
\label{fig:sparse}
\end{figure}

\subsection{Smoothness}
\label{se:smoothness}

To construct a prior distribution for a semi-orthogonal matrix with smooth structure, we introduce row dependence through the correlation matrix $\Omega.$
Example~\ref{ex:matern} considers the case where the entries of $Z$ are i.i.d. standard normals and $\Omega$ is defined with the Mat\'ern correlation function. Proposition~\ref{prop:corr} verifies that Condition~\ref{ass:Omega} of Theorem~\ref{thm:limitingdistribution} holds for a variety of different sequences of correlation matrices. Example~\ref{ex:asymptotic_smooth} examines the setup of Example~\ref{ex:matern} in the asymptotic setting of
Theorem~\ref{thm:limitingdistribution}.

\begin{example}  \label{ex:matern} Let the entries of $Z$ be i.i.d. standard normal random variables and let $\Omega = (\Omega_{ij})$ satisfy $\Omega_{ij} =  C^{\text{Mat\'ern }}_{\rho,\nu}\left(\left|i - j \right|\right)$ where
\begin{equation} \label{materncorrfunction}
    C^{\text{Mat\'ern }}_{\rho,\nu}(d) = \frac{2^{1-\nu}}{\Gamma(\nu)}\Bigg(\sqrt{2\nu}\frac{d}{\rho}\Bigg)^\nu K_\nu\Bigg(\sqrt{2\nu}\frac{d}{\rho}\Bigg)
\end{equation}
is the Mat\'ern correlation function \citep{Rasmussen2006} with parameters $\rho, \nu > 0.$ In this case, $Q_X \sim \text{MACG}( \Omega).$ Figure~\ref{fig:smooth} compares the columns of a single realization $X^*$ of $X$ to those of $\sqrt{p}\,Q_{X^*}$ for parameters $\nu=3, \rho=12$ and dimensions $p=100,k=3.$ The columns of the two matrices share a similar smooth structure, as we should expect based on Theorem~\ref{thm:limitingdistribution}. Proposition~\ref{prop:corr} confirms that Condition~\ref{ass:Omega} of Theorem~\ref{thm:limitingdistribution} holds for a sequence of correlation matrices constructed from the Mat\'ern correlation function \eqref{materncorrfunction}. 
\end{example} 



\begin{figure}[H]
\centerline{\includegraphics[width=\linewidth]{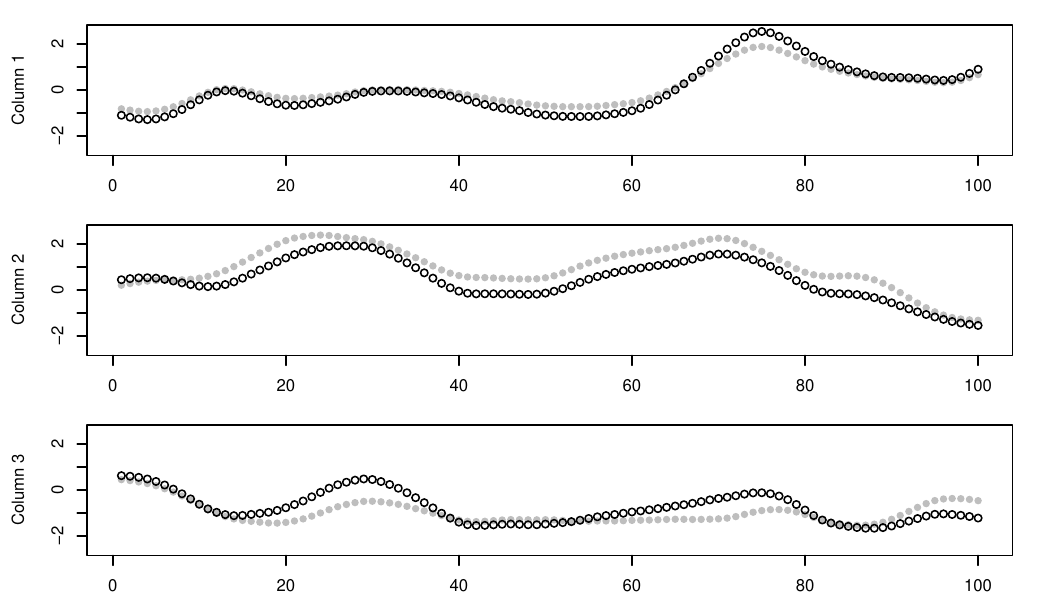}}
\caption{A comparison of the columns of a single realization $X^*$ of $X$ with those of $\sqrt{p} Q_{X^*}$ when the entries $Z$ are i.i.d. standard normals and $\Omega$ is constructed from the Mat\'ern correlation function, as discussed in Example~\ref{ex:matern}. The entries of $X^*$ appear as gray dots while the entries of $\sqrt{p}\,Q_{X^*}$ appear as black circles.}
\label{fig:smooth}
\end{figure}

\begin{proposition} \label{prop:corr}
Condition 2 of Theorem~\ref{thm:limitingdistribution} is satisfied for $\Omega = (\Omega_{ij})$ constructed in any of the following ways: 
\begin{enumerate}[label=\Roman*.,ref=\Roman*]
\item (Power correlation) $\Omega_{ij} = C^{\text{Power}}_{\rho}(|i - j|)$ where $C^{\text{Power}}_{\rho}(d) = \rho^{d}$ for $0 < \rho < 1;$  
\label{item1examples}
\item (Squared exponential correlation) $\Omega_{ij} = C^{\text{SE}}_{\rho}(|i - j|)$ where $$C^{\text{SE}}_{\rho}(d)=\exp\left[-d^2/(2\rho^2)\right]$$ for $\rho > 0;$
\label{item2examples}
\item (Mat\'ern correlation) $\Omega_{ij} = C^{\text{Mat\'ern }}_{\rho, \nu}(|i - j|)$ where $C_{\rho, \nu}$ is defined as in \eqref{materncorrfunction} for $\rho, \nu > 0;$
\label{item3examples}
\item (Banded correlation) Let $C$ be any of the correlation functions above and set
\begin{equation*}
    \Omega_{ij} = 
    \begin{cases}
        C(|i-j|)   \quad &|i-j| \leq \tau \\
        0       \quad &|i-j| > \tau
    \end{cases}
\end{equation*} for a bandwidth $0 \leq \tau < p .$
\label{item4examples}
\end{enumerate}
\end{proposition}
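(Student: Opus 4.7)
The plan is to reduce all four cases of Proposition \ref{prop:corr} to a single sufficient criterion: absolute summability of the underlying correlation function $C(\cdot)$ on $\ZZ$. Every $\Omega_p$ in the proposition is a symmetric Toeplitz matrix with $\Omega_{ij} = C(|i-j|)$ and $C(0) = 1$, so the two bullet points in Assumption \ref{ass:Omega} admit a uniform treatment.

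For the trace condition, I would rewrite
$$\bm{c}_{\Omega}(p) \;=\; \frac{1}{p}\sum_{i,j=1}^{p} C(|i-j|)^2 \;=\; \sum_{d=-(p-1)}^{p-1}\left(1 - \frac{|d|}{p}\right) C(|d|)^2.$$
Provided $\sum_{d\in\ZZ} C(|d|)^2 < \infty$, the dominated convergence theorem yields $\bm{c}_{\Omega}(p)\to \sum_{d\in\ZZ} C(|d|)^2 =: \bm{c}_{\Omega}$, with $\bm{c}_{\Omega}\geq C(0)^2 = 1 > 0$. For the spectral norm condition, symmetry of $\Omega_p$ gives
$$\|\Omega_p\| \;\leq\; \|\Omega_p\|_1 \;=\; \max_{j}\sum_{i} |\Omega_{ij}| \;\leq\; \sum_{d\in\ZZ} |C(|d|)|,$$
which is bounded uniformly in $p$ once $C$ is absolutely summable. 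Since $|C(d)|\leq 1$ makes absolute summability stronger than square summability, it suffices in each case to verify $\sum_{d=1}^{\infty} |C(d)| < \infty$.

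For cases I, II, and IV the verification is elementary: $\sum_d \rho^{|d|}$ is a geometric series, the Gaussian kernel $\exp(-d^2/(2\rho^2))$ has summable tails, and the banded version has finite support. The technical heart is case III, where one must extract the decay rate of the Mat\'ern kernel from the large-argument asymptotics of the modified Bessel function, $K_\nu(x)\sim \sqrt{\pi/(2x)}\,e^{-x}$ as $x\to\infty$. Substituting into \eqref{materncorrfunction} gives $C^{\text{Mat\'ern}}_{\rho,\nu}(d) = \mathcal{O}\!\left(d^{\nu-1/2}\,e^{-\sqrt{2\nu}\,d/\rho}\right)$, which is summable. This Bessel-tail step is the main obstacle; the remaining cases follow by comparison with a geometric series.

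One subtlety worth addressing is that Assumption \ref{ass:Omega} requires $\Omega_p$ to be a bona fide correlation matrix, i.e., positive semidefinite. For cases I, II, and III this is classical: each of the power, squared exponential, and Mat\'ern functions is positive definite on $\RR$ by Bochner's theorem (all have nonnegative spectral densities), so the restriction to the integer lattice is positive definite. For case IV positive semidefiniteness is not automatic and must be imposed or justified separately, for instance by restricting to bandwidths $\tau$ small enough to guarantee diagonal dominance, or by interpreting the banding as multiplication with a compactly supported positive definite taper and invoking Schur's product theorem.
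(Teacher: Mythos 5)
Your proof is correct and follows essentially the same route as the paper's: rewrite $p^{-1}\tr(\Omega_p^2)$ as a weighted sum over diagonals $\sum_{|d|<p}(1-|d|/p)\,C(|d|)^2$, bound $\|\Omega_p\|$ by $\|\Omega_p\|_1$, and handle the Mat\'ern case via the large-argument asymptotics $K_\nu(x)\sim\sqrt{\pi/(2x)}\,e^{-x}$. The paper carries these steps out case by case, bounding each column sum by an integral that it recognizes as a Laplace, Gaussian, or Bessel-type density; your single summability criterion $\sum_{d}|C(d)|<\infty$ subsumes all of those computations and is cleaner, though it proves nothing the paper's case-by-case argument does not. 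One substantive point in your favor: the paper dismisses case IV as trivial, whereas you correctly observe that banding a positive definite Toeplitz matrix need not preserve positive semidefiniteness, so the claim that the banded $\Omega_p$ is a correlation matrix (as Assumption \ref{ass:Omega} requires, and as is needed for $\Omega_p^{1/2}$ to exist) genuinely requires an additional hypothesis or argument such as the tapering/Schur-product device you sketch. That caveat should be stated rather than left as a remark, but flagging it at all goes beyond what the paper's proof does.
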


\begin{example} \label{ex:asymptotic_smooth}
We consider the prior for a smooth semi-orthogonal matrix described in Example~\ref{ex:matern} in the asymptotic setting of Theorem~\ref{thm:limitingdistribution}. For each $k,$ let $Z_k$ be a $p \times k$ matrix whose entries are i.i.d. standard normal random variables, and let the $(i,j)$th entry of $\Omega_p$ equal $C^{\text{Mat\'ern }}_{\rho,\nu}\left(\left|i - j \right|\right).$  Then set $X_k = \Omega_p^{1/2}Z_k$ and $Q_{X_{k}} = X_k\left(X_k^\top X_k\right)^{-1/2}.$ For concreteness, we will examine the asymptotic behavior of four entries of $\sqrt{p}Q_{X_k}$ and $X_k:$ those with row and column indices $(1,1), (2,1),$ $(2,2)$, and $(4,2).$ We can select these entries from $\sqrt{p}Q_{X_k}$ and $X_k$ using the matrix $\Delta_{\mathcal{M}_k^{(1)}, \ldots, \mathcal{M}_k^{(k)}}$ introduced in Section~\ref{se:wass_approx} by letting $\mathcal{M}_k^{(1)}=\{1,2\}$ and $\mathcal{M}_k^{(2)}=\{2,4\}$ with $\mathcal{M}_k^{(j)}=\emptyset$ for $j = 3, \ldots, k.$ The vectors $\Delta_{\mathcal{M}_k^{(1)}, \ldots, \mathcal{M}_k^{(k)}} \operatorname{vec}\left(\sqrt{p}Q_{X_{k}}\right)$ and $\Delta_{\mathcal{M}_k^{(1)}, \ldots, \mathcal{M}_k^{(k)}} \operatorname{vec}\left(X_k\right)$ contain the desired entries. The vector $\Delta_{\mathcal{M}_k^{(1)}, \ldots, \mathcal{M}_k^{(k)}} \operatorname{vec}\left(X_k\right) \stackrel{d}{=} \varphi$ where $\varphi \sim N_{4}(0_, \Sigma)$ with block diagonal covariance $$\Sigma = \Delta_{\mathcal{M}_k^{(1)}, \ldots, \mathcal{M}_k^{(k)}}\left(I_{k}\otimes\Omega_p\right)\Delta_{\mathcal{M}_k^{(1)}, \ldots, \mathcal{M}_k^{(k)}}^\top = \text{blkdiag}(\Sigma_1, \Sigma_2)$$ with blocks 
\begin{align*}
\Sigma_1= 
\begin{bmatrix}
1 & C^{\text{Mat\'ern }}_{\rho,\nu}\left(1\right) \\ 
C^{\text{Mat\'ern }}_{\rho,\nu}\left(1\right) & 1 
\end{bmatrix}, \quad
\Sigma_2= 
\begin{bmatrix}
1 & C^{\text{Mat\'ern }}_{\rho,\nu}\left(2\right) \\ 
C^{\text{Mat\'ern }}_{\rho,\nu}\left(2\right) & 1 
\end{bmatrix}.
\end{align*}
Proposition~\ref{prop:corr} establishes that Condition~\ref{ass:Omega} of Theorem~\ref{thm:limitingdistribution} is met, and it is easy to verify that Condition~\ref{ass:moments} of Theorem~\ref{thm:limitingdistribution} is met as well. Thus, Theorem~\ref{thm:limitingdistribution} implies that
\begin{align} \label{wassconvergence_smooth}
W_2\left[\operatorname{law}\left\{\Delta_{\mathcal{M}_k^{(1)}, \ldots, \mathcal{M}_k^{(k)}} \operatorname{vec}\left(\sqrt{p}Q_{X_k}\right)\right\}, \operatorname{law}\left\{\varphi\right\}\right] \to 0
\end{align}
as $k \to \infty$ and $p=p(k) \to \infty$ provided that $k/p \to 0.$ Because the Wasserstein distance metrizes weak convergence, we can conclude from \eqref{wassconvergence_smooth} that $\Delta_{\mathcal{M}_k^{(1)}, \ldots, \mathcal{M}_k^{(k)}} \operatorname{vec}\left(\sqrt{p}Q_{X_k}\right) \stackrel{d}{\to} N_{4}(0_, \Sigma)$ as $k \to \infty$ and $p=p(k) \to \infty$ provided that $k/p \to 0.$
\end{example}

\section{Data Examples} \label{DataExamples}

\subsection{Sparse network eigenmodel for protein interaction data}
\label{sparse_network_eigenmodel}



To demonstrate the proposed family of prior distributions in practice, we revisit a data example from \citet{Hoff2009} in which the author proposes a network eigenmodel for the protein interaction data of \citet{Butland2005}. This data example has also appeared in \citet{Jauch2021a}, \citet{Loyal2025}, and other articles. In contrast to these previous works, all of which considered a uniform prior distribution for the eigenvectors of the network eigenmodel, we assign the eigenvectors the sparsity-inducing prior introduced in Section~\ref{se:sparsity}. Our experiments indicate that we can learn an appropriate level of sparsity from the data, leading to more interpretable results without sacrificing predictive performance. 


The interaction data of $p=270$ proteins of \textit{Escherichia coli} are recorded in a binary, symmetric $p\times p$ matrix $Y = \left(y_{ij}\right).$ We have $y_{ij}=1$ if protein $i$ and protein $j$ interact while $y_{i,j}=0$ otherwise. The network eigenmodel assumes that 
\begin{align*}y_{ij} &\mid \pi_{ij} \stackrel{\text{ind.}}{\sim} \text{Bernoulli}\left(\pi_{ij}\right) \\ 
\pi_{ij} &=  \Phi\left\{c + \left(Q \Lambda Q^\top\right)_{ij}\right\}
\end{align*}
where $\Phi$ is the cumulative distribution function of a standard normal random variable and $(c, Q, \Lambda)$ are unknown parameters. The parameter $Q$ is a $p \times k$ semi-orthogonal matrix, $\Lambda=\text{diag}(\lambda_1, \ldots, \lambda_k)$ is a $k \times k$ diagonal matrix, and $c$ is a real number. As in \citet{Hoff2009} and \citet{Jauch2021a}, the diagonal elements of $\Lambda$ have independent $N(0,p)$ prior distributions while $c \sim N(0,10^2).$ See Section 5.1 of \citet{Jauch2021a} for a discussion of the identifiability of the network eigenmodel.


The eigenvector matrix $Q$ is assigned the sparsity-inducing prior introduced in Section~\ref{se:sparsity}. More precisely, we let $Z=(Z_{ij})$ be a $p \times k$ matrix with i.i.d. real entries distributed according to the density \eqref{zdensity}.
The prior for $Q$ is the distribution of $Q_Z = Z\left(Z^\top Z \right)^{-1/2},$ the projection of $Z$ onto the Stiefel manifold obtained via the polar decomposition. The density \eqref{zdensity} includes a parameter $\ell \in (0,1)$ that determines the sparsity of the eigenvector matrix $Q,$ with a smaller value of $\ell$ corresponding to greater sparsity. Instead of choosing a fixed value, we assign $\ell \in (0,1)$ a uniform prior and learn it from the data. In practice, we use the scale mixture representation given in Proposition~\ref{scalemixture} so that $Z_{ij} \, \vert\,  \theta_{ij} , \ell \stackrel{\text{ind.}}{\sim} \text{Normal}\left(0, \theta_{ij}/\ell\right)$ where $\theta_{ij} \stackrel{\text{i.i.d.}}{\sim} \text{Beta}\left[\ell/2, \left(1-\ell\right)/2\right].$ 
Let $\Theta$ denote the $p \times k$ matrix whose $ij$th entry is $\theta_{ij}.$


We obtain an MCMC approximation to the posterior distribution by applying polar expansion and adaptive HMC as implemented in Stan. After polar expansion, the posterior density is 
\begin{align}
    p\left(c, Z, \Theta, \Lambda, \ell \mid Y \right) &\propto \prod_{1\leq i < j \leq n}
   \Phi\left\{c + \left( Q_Z \Lambda Q_Z^\top\right)_{ij}\right\}^{y_{ij}} \left[1- \Phi\left\{c + \left( Q_Z \Lambda Q_Z^\top\right)_{ij}\right\}\right]^{1-y_{ij}} \nonumber \\
    &\times \prod_{i=1}^p \prod_{j=1}^k \text{Normal}\left(z_{ij} \mid 0, \theta_{ij}/\ell\right) \text{Beta}\left[\theta_{ij} \mid \ell/2, \left(1-\ell\right)/2\right] \nonumber \\
    & \times \text{Normal}\left(c \mid 0, 10^2\right) \times \mathbbm{1}_{(0,1)}(\ell) \times \prod_{j=1}^{k} \text{Normal}\left(\lambda_j \mid 0, p\right). \label{network_eigen_post}
\end{align}
Given a Markov chain $\{c_t, Z_t, \Theta_t, \Lambda_t, \ell_t \}_{t=1}^T$ whose stationary distribution has density \eqref{network_eigen_post}, we approximate the marginal posterior distribution of $Q$ by $\{ Q_t\}_{t=1}^T$ where $Q_t = Z_t \left( Z_t^\top Z_t\right)^{-1/2}$ for each $t.$


Previous analyses of the protein interaction data based on the network eigenmodel have differed in their choice of dimension $k.$ \citet{Hoff2009} and \citet{Jauch2021a} chose $k=3$ without much discussion. The dimension selection procedure of \citet{Loyal2025} led to a choice of $k=5$ which was further supported by a posterior predictive analysis. In our experiments, we set $k=5.$


\begin{figure}%
    \centering
    \subfloat[\centering ]{{\includegraphics[width=6.5cm]{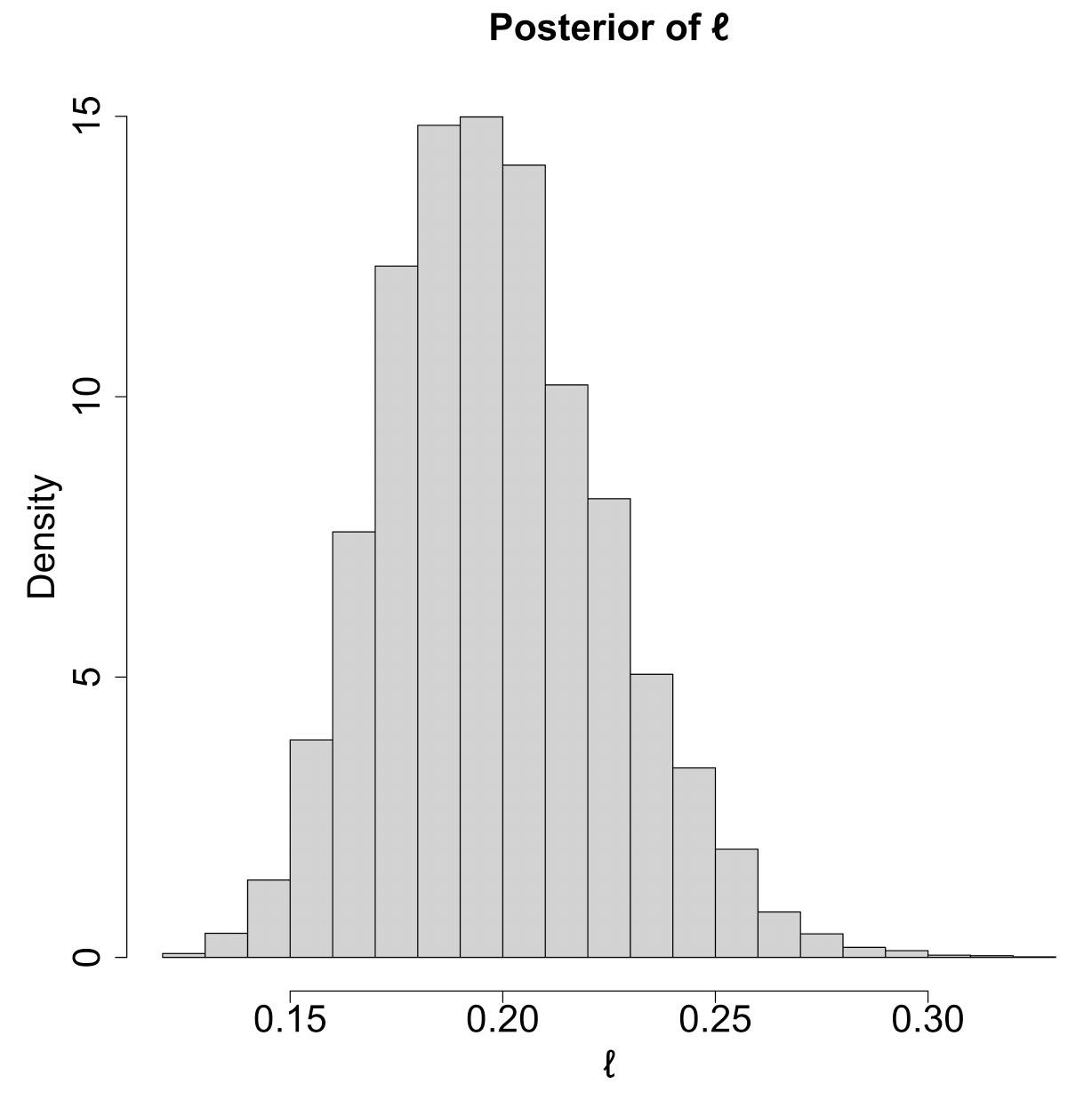} }}%
    \subfloat[\centering ]{{\includegraphics[width=6.5cm]{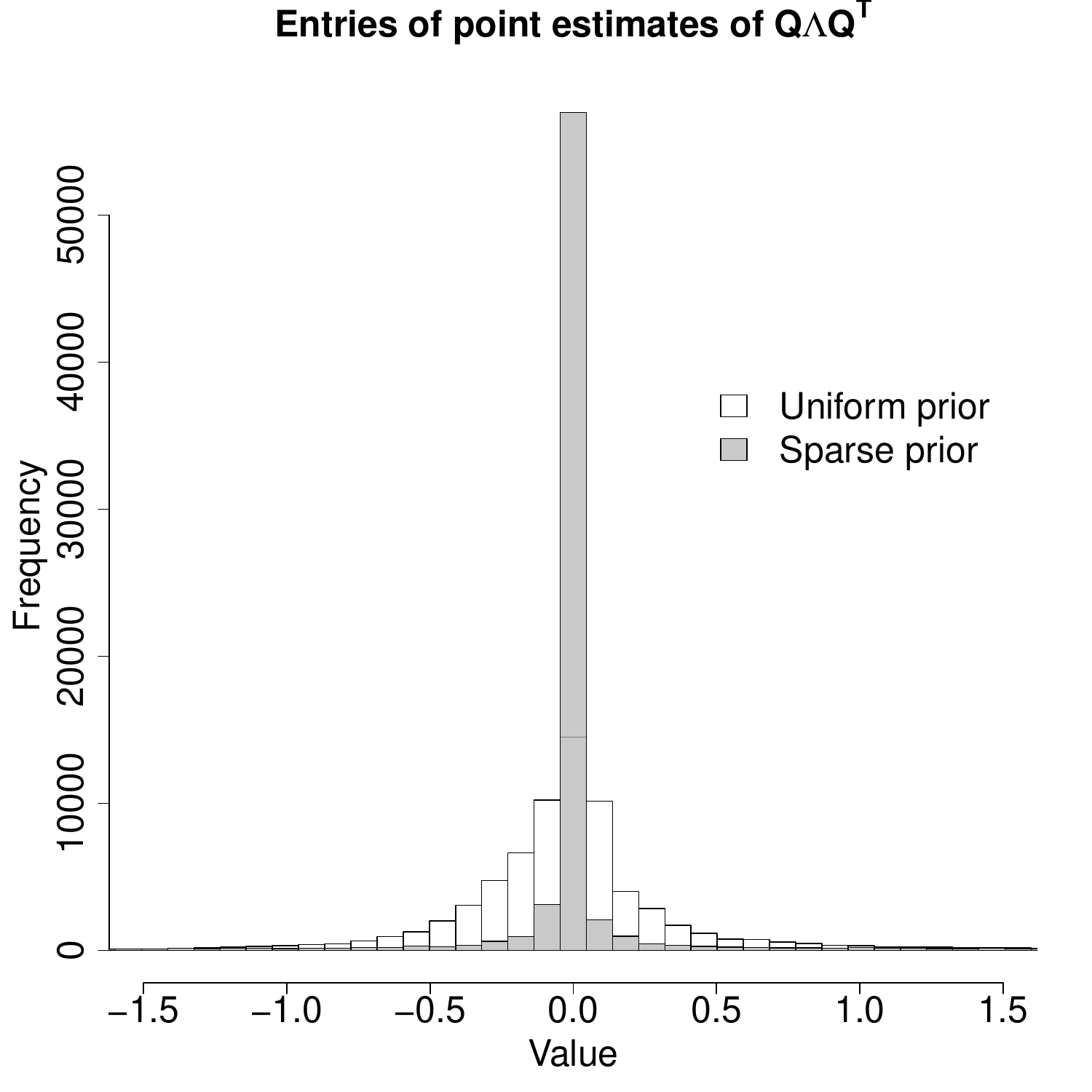} }}%
    \caption{Panel (a) shows the marginal posterior distribution of the sparsity parameter $\ell \in (0,1).$ A smaller value of $\ell$ leads to greater sparsity. Panel (b) compares the entries of the point estimate of $Q \Lambda Q^T$ under the two  priors for $Q.$}%
    \label{fig:network_eigen}%
\end{figure}


In the first experiment, we fit both the sparse network eigenmodel (with the sparsity-inducing prior for $Q$ described above) and the standard network eigenmodel (with the uniform prior for $Q$) to the full protein network data set. 
Figure~\ref{fig:network_eigen} provides posterior summary plots. Panel (a) shows the marginal posterior distribution of the sparsity parameter $\ell.$ The posterior mean is around $.20$ while $(0.15, 0.25)$ is an approximate 95\% credible interval. The small estimated value of $\ell$ indicates that the data favor a sparse $Q.$ Panel (b) compares the entries of our point estimate of $Q \Lambda Q^T$ (the element-wise posterior median) under the two priors. 
As we might expect, the point estimate of $Q \Lambda Q^T$ under the sparsity-inducing prior has many nearly zero entries with a smaller number of very large entries. This leads to greater interpretability: Nearly zero entries of $Q \Lambda Q^T$ correspond to pairs of proteins whose probability of interaction is close to the baseline value of $\Phi(c),$ while large entries correspond to pairs of proteins whose probability of interaction deviates from that baseline value.


In the second experiment, we compared the out-of-sample predictive performance of the sparse network eigenmodel to that of the standard network eigenmodel. We repeatedly fit both models to randomly selected subsets of the dyads in the protein network data, estimated the probabilities that connections exist for the held out dyads, then calculated the area under the curve (AUC) for the two models. The procedure was repeated 16 times with 1/3 of the dyads held out each time. We took posterior means as our point estimates for the probabilities. We found that the predictive performance of the sparse network eigenmodel was comparable to that of the standard network eigenmodel. In particular, the AUC of the sparse network eigenmodel was greater in 9/16 of the repetitions with a median difference of .5\% in favor of the sparse network eigenmodel. In light of its greater interpretability and comparable predictive performance, the sparse network eigenmodel is an appealing alternative to the standard network eigenmodel for this protein interaction data. 


\subsection{Smooth PCA of Hawaii ocean oxygen data}


The \href{https://hahana.soest.hawaii.edu/hot/}{Hawaii Ocean Time-series (HOT) program} has measured physical and biogeochemical variables at a deep-water sampling site 100 km north of Oahu since October 1988. The measurements are made across a range of depths on an approximately monthly basis. Long-term time series like those collected by HOT are valuable in climate studies, because they help researchers understand variability and changes in physical and biogeochemical processes in the ocean. 

In this section, we fit the model-based SVD from Example~\ref{ex:SVD} to oxygen concentration data collected by the HOT program and analyzed in \citet{zhang2023robustbayesianfunctionalprincipal}. To account for the functional relationship between oxygen concentration and depth, we assign $V$ a prior appropriate for a semi-orthogonal matrix with smooth structure, as discussed in Section~\ref{se:smoothness}. The model and prior specification are similar to those from the PCA of the Canadian weather data in Section 5.2 of \citet{Jauch2021a}. That analysis can be viewed as another illustration of the proposed family of prior distributions. The oxygen concentration dataset can be obtained from the \href{https://hahana.soest.hawaii.edu/hot/hot-dogs/cextraction.html}{Hawaii Ocean Time-series Data Organization \& Graphical System (HOT-DOGS)} or the \href{https://github.com/SFU-Stat-ML/RBFPCA}{Github repository} associated with \citet{zhang2023robustbayesianfunctionalprincipal}.


The data 
include oxygen concentrations in units of $\mu$mol/kg observed on $n=133$ separate occasions from January 1, 2008 to December 31, 2021. On each occasion, oxygen concentrations were measured at $p=100$ evenly spaced depths ranging from 2 meters to 200 meters below the ocean surface. 
The raw data matrix $Y_\text{raw}$ has $n=133$ rows and $p=100$ columns with entry $(i,j)$ recording the oxygen concentration at a depth of $2j$ meters on the $i$th occasion. 


We subtract column means from $Y_\text{raw}$ and model the resulting matrix as $Y = UDV^\top + \sigma E.$ The semi-orthogonal matrices $U \in \mathcal{V}(k,n)$ and $V \in \mathcal{V}(k,p),$ the diagonal matrix $D = \text{diag}(d_1, \ldots, d_k)$ with $d_1, \ldots, d_k > 0,$ and the scale $\sigma > 0$ are unknown parameters. The entries of $E$ are independent standard normal errors. As in Section 5.2 of \citet{Jauch2021a}, we can view this model-based SVD as a PCA of the functional data. The rows of $UD$ contain the principal component scores for each observation, while the columns of $V$ form the corresponding basis of principal component curves. In this analysis, we set $k=4.$


We assign the semi-orthogonal matrix $V$ a hierarchical prior chosen to reflect the functional nature of the oxygen concentration data. In particular, we let 
\begin{align*}
V \mid \rho &\sim \text{MACG}\left(\Omega\right) \\ 
1/\rho &\sim \text{Gamma}(\alpha, \beta)
\end{align*}
where $\Omega = (\Omega_{ij})$ is constructed from the squared exponential correlation function from Proposition~\ref{prop:corr} with $\Omega_{ij} = C^{\text{SE}}_{\rho}(|2i - 2j|).$ The squared exponential correlation function can be obtained from the Mat\'ern correlation function of Example~\ref{ex:matern} by letting $\nu \to \infty.$ The length-scale parameter $\rho,$ which controls the ``wiggliness" of the principal component curves, is treated as unknown and assigned an inverse gamma prior. The inverse gamma prior places negligible probability mass near zero, effectively ruling out values of $\rho$ that are too small to make scientific sense. We select the shape $\alpha$ and the rate $\beta$ of the inverse gamma prior according to the heuristic of Section 5.2 of \citet{Jauch2021a}, which we review here. For a fixed $\rho,$ we can reason (based on Theorem~\ref{thm:limitingdistribution}, Proposition~\ref{prop:corr}, and the fact that $p \gg k$) that \textit{a priori} each column of $V$ will resemble a mean-zero Gaussian process (GP) with a squared exponential correlation function and length-scale $\rho.$ The expected number of zero crossings by such a GP within the interval $[0,T]$ is $T/(\pi\rho)$ \citep{rice1945mathematical, Kratz2006, Rasmussen2006}. \footnote{The formula stated in \citet{Jauch2021a} is off by a factor of two.} 
We choose $\alpha$ and $\beta$ so that $\rho$ has a prior mean of $200/(2\pi)$ and a prior standard deviation (SD) of 10. If $\rho$ were fixed at its prior mean, the expected number of zero crossings of the principal component curves would be approximately two. 


We now specify the priors for the remaining parameters $U, \sigma^2,$ and $d_1, \ldots, d_k.$ We assign the semi-orthogonal matrix $U$ a uniform prior. The priors for $\sigma^2$ and $d_1, \ldots, d_k$ are inverse gamma and truncated normal: 
\begin{align*}
    1/\sigma^2 &\sim \text{Gamma}\left(\frac{\nu}{2},\frac{\nu}{2} s^2\right) \\ 
    p(d_1, \ldots , d_k) &\propto \mathbbm{1}\left\{d_1, \ldots , d_k > 0\right\} \prod_{i=1}^k \text{Normal}(d_i \mid 0, \tau^2).
\end{align*} 
We select the hyperparameters $\nu, s^2,$ and $\tau$ using the empirical Bayes strategy from Section 5.2 of \citet{Jauch2021a}. 


We simulate from the posterior distribution via polar expansion combined with adaptive HMC, as described in Section~\ref{sec:preliminaries}. As our point estimate of $V,$ we take the first $k=4$ right singular vectors of the posterior mean of $UDV^\top.$ The left side of Figure~\ref{fig:hawaii} compares our point estimate (in black) to the results of classical PCA (in gray). The principal component curves produced by our approach are slightly smoother than those obtained from classical PCA but are overall very similar. (The differences are more pronounced in the analysis of the Canadian weather data in Section 5.2 of \citet{Jauch2021a}.) The estimated principal component curves resemble those from \citet{zhang2023robustbayesianfunctionalprincipal} and lead to similar interpretations. The first PC relates to the overall oxygen concentration level across the entire range of depths, with a higher PC score corresponding to a higher concentration. The second PC relates to differences in oxygen concentrations at middle depths (approximately 50-125 meters) compared to deeper depths (approximately 150-200 meters), with a higher PC score corresponding to increased oxygen concentration at middle depths. The third and fourth PCs relate to more complex phenomena. The right side of Figure~\ref{fig:hawaii} compares a histogram estimate of the posterior density of $\rho$ with its inverse gamma prior density. Compared to the prior (mean $31.83,$ SD $10$), the posterior of $\rho$ is much more concentrated with with a slightly lower mean (mean $25.13,$ SD $1.41$), indicating that we can learn a suitable length-scale $\rho$ from the data. 

\begin{figure}[H]
    \centering 
    \subfloat[\centering ]{{\includegraphics[width=6.5cm]{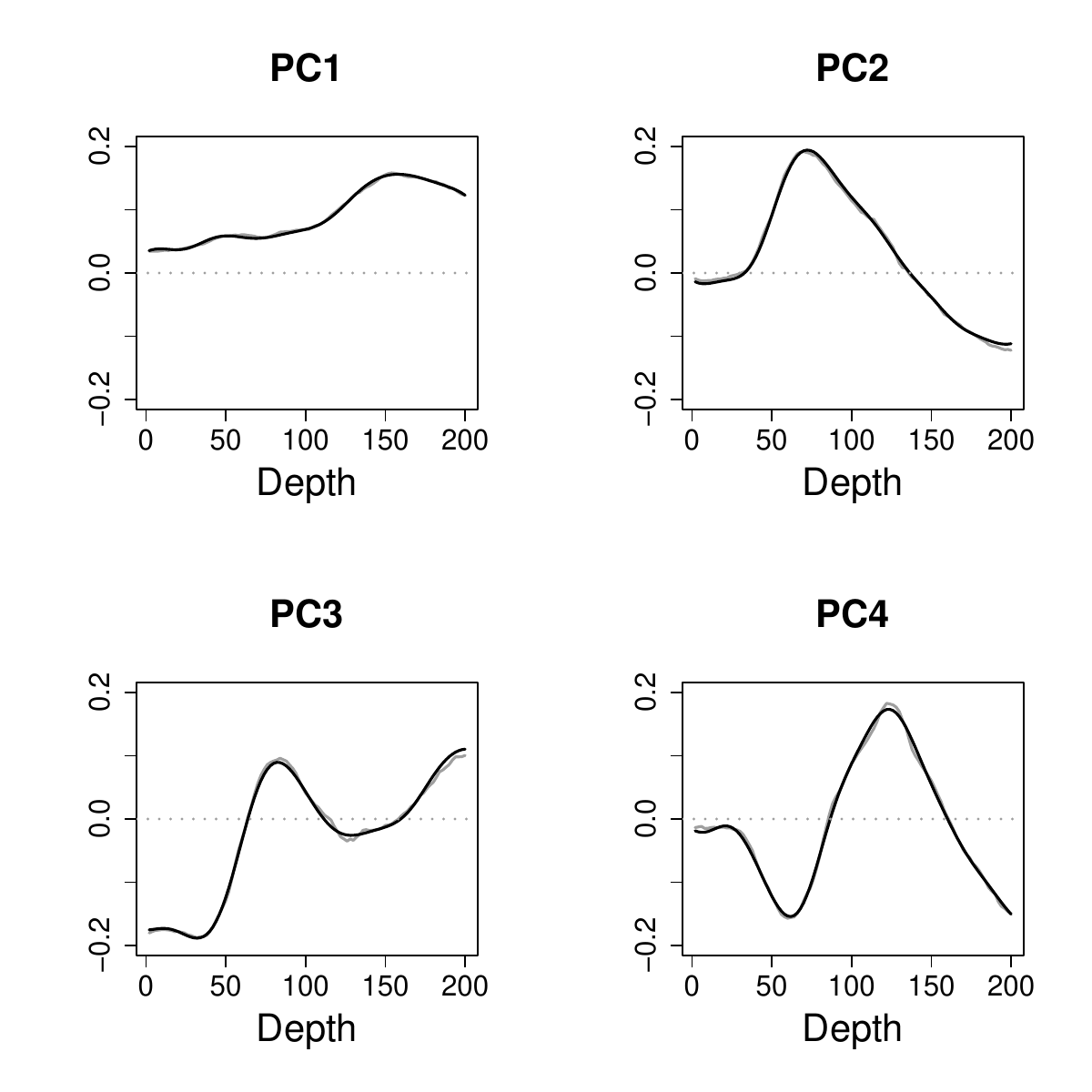} }}
    \hspace{-.1cm}
    \subfloat[\centering ]{{\includegraphics[width=6.5cm]{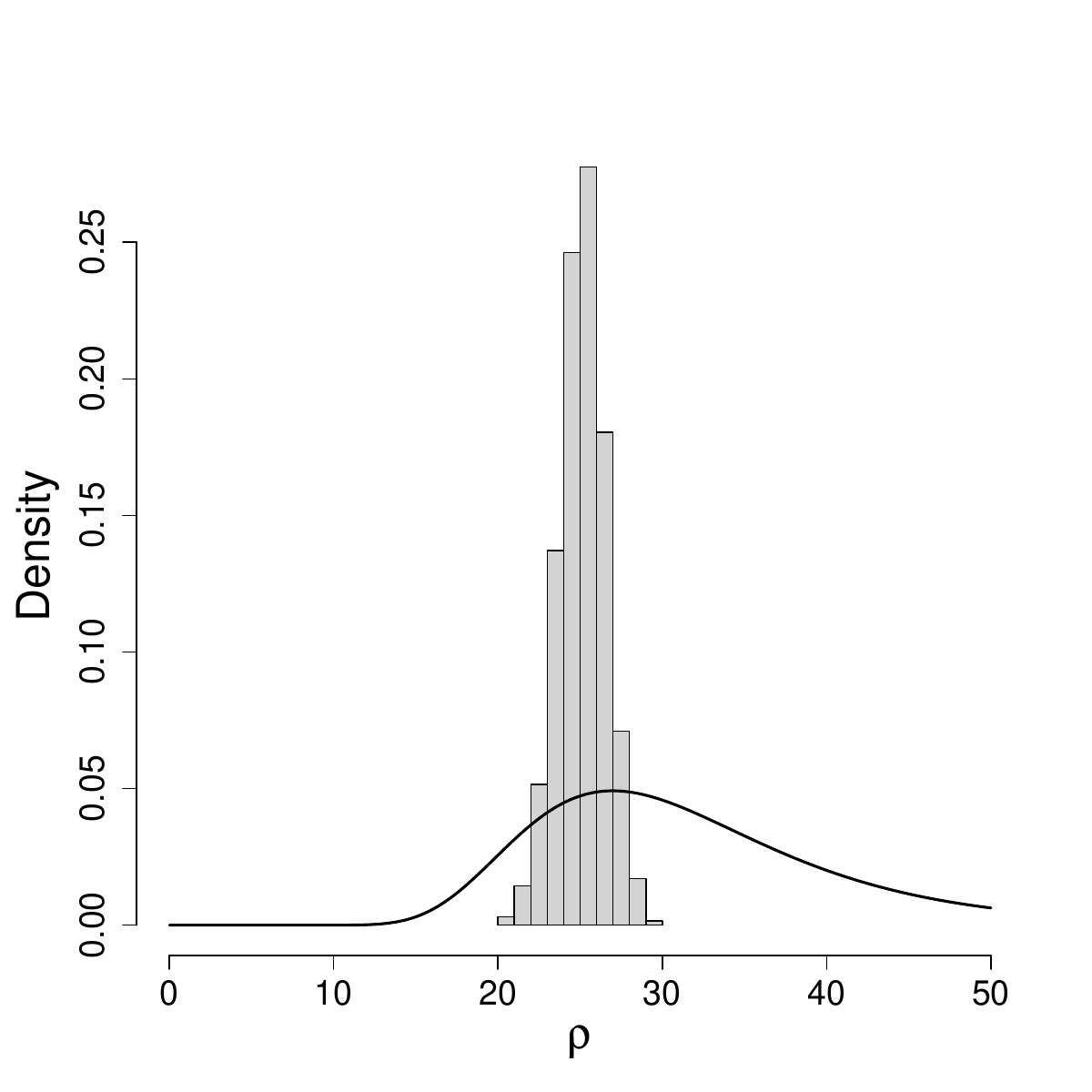} }}
    \caption{The left side compares our point estimate (in black) to the results of classical PCA (in gray). The right side compares a histogram estimate of the posterior density of $\rho$ with its inverse gamma prior density.}%
    \label{fig:hawaii}
\end{figure}

\section{Discussion}

We presented an approach to defining prior distributions for structured semi-orthogonal matrices that leads to tractable posterior inference via parameter expanded MCMC. The key idea was to introduce a suitably constructed latent random matrix $X$ whose projection $Q_X$ onto the Stiefel manifold is identified with the semi-orthogonal matrix parameter. We demonstrated that $Q_X$ inherits the invariance properties of $X$ (Theorem~\ref{prop:invariance}) and established that the Wasserstein distance between the distribution of elements of $\sqrt{p}Q_X$ and the distribution of the corresponding elements of $X$ is small when $k \ll p$ (Theorem~\ref{thm:limitingdistribution}). These theoretical results, which may be of independent interest to those working in high-dimensional probability and random matrix theory, imply that we can incorporate structural assumptions into the prior distribution for a semi-orthogonal matrix by building these features into the distribution of $X.$ To incorporate sparsity, we introduced a novel shrinkage prior with a number of appealing properties, including a representation as a scale mixture of normal distributions. To incorporate smoothness, we considered a variety of correlation functions. We illustrated these priors in applications to protein interaction data and ocean oxygen data.

There are several interesting directions for future research: 

\begin{itemize}
\item The theoretical contributions of this article were focused on properties of the proposed family of prior distributions. 
A natural next step would be to investigate properties of the resulting posterior distribution in specific models like the model-based SVD of Example~\ref{ex:SVD}, the network eigenmodel of Example~\ref{ex:eigenmodel}, or the spiked covariance model \citep{Cai2013, Gao2015, Xie2024} under the assumption that a semi-orthogonal matrix parameter is sparse or smooth.

\item In some settings, it is natural to assume that a semi-orthogonal matrix parameter is \textit{jointly sparse} with entire rows of zeros \citep{Vu2013, cai2015optimal, Xie2022} or has heterogeneous structure with different levels of sparsity or smoothness across its columns \citep{north2024flexible}. The construction of Section~\ref{sec:the_family_and_properties}, which requires that the columns of $X$ are i.i.d., cannot easily accommodate these structural assumptions. Another natural next step would be to generalize that construction and develop analogous theory with an eye toward addressing this limitation.

\item In applications involving structured semi-orthogonal matrix parameters, there is almost always uncertainty regarding the number of columns $k.$  To account for this uncertainty, we could incorporate the model averaging and dimension selection ideas proposed in \citet{Hoff2007} or \citet{Loyal2025}. 

\item An interesting and challenging mathematical problem would be to investigate the distance between the distributions of $\sqrt{p} Q_X$ and $X$ (or transformations thereof) in terms of other probability metrics \citep{gibbs2002choosing}. 

\end{itemize}

\begin{funding}
The first author was supported by the U.S. National Science Foundation (DMS-2515376). 

\end{funding}

\begin{supplement}
The \href{https://michaeljauch.github.io/SupplementPriorsPaper.pdf}{supplemental material} includes proofs of all theorems and propositions, as well as the additional proposition described at the end of Section~\ref{sec:the_family_and_properties} 
\end{supplement}

\bibliographystyle{ba}
\bibliography{biblio,references}

\end{document}